\def\theequation{\arabic{section}.\arabic{equation}}
\renewcommand{\thesection}{\Roman{section}}
\newcommand{\be}{\begin{align}}
\newcommand{\en}{\end{align}}
\newcommand{\bea}{\begin{align}}
\newcommand{\ena}{\end{align}}
\newcommand{\beano}{\begin{align}}
\newcommand{\enano}{\end{align}}
\newcommand{\bee}{\begin{enumerate}}
\newcommand{\ene}{\end{enumerate}}
\newcommand{\mc}{\mathcal}
\newcommand{\E}{{\cal E}}
\newcommand{\V}{{\cal V}}
\newcommand{\F}{{\cal F}}
\newcommand{\Lc}{{\cal L}}
\newcommand{\1}{1 \!\! 1}
\newcommand{\Hil}{\mc H}
\newcommand{\cP}{\ensuremath{\mathcal{P}}}
\newcommand{\cT}{\ensuremath{\mathcal{T}}}
\newcommand{\cPT}{\ensuremath{\mathcal{PT}}}
\newtheorem{thm}{Theorem}
\newtheorem{prop}[thm]{Proposition}
\newenvironment{proof}{\noindent {\bf Proof --}}{\hfill$\square$ \vspace{3mm}\endtrivlist}
\renewcommand{\thesection}{\Roman{section}}
\begin{document}

\thispagestyle{empty}

\vspace*{2cm}

\begin{center}
{\Large \bf $\cPT$-symmetric graphene under a magnetic field}   \vspace{2cm}\\

{\large Fabio Bagarello}\\
  Dipartimento di Energia, Ingegneria dell'Informazione e Modelli Matematici,\\
Facolt\`a di Ingegneria, Universit\`a di Palermo,\\ I-90128  Palermo, Italy\\
e-mail: fabio.bagarello@unipa.it\\
home page: www.unipa.it/fabio.bagarello

\vspace{2mm}


\vspace{2mm}

{\large Naomichi Hatano}\\
Institute of Industrial Science, University of Tokyo,\\ Komaba 4-6-1, Meguro, Tokyo 153-8505, Japan\\
e-mail: hatano@iis.u-tokyo.ac.jp\\

\end{center}

\vspace*{1cm}

\begin{abstract}
\noindent  We propose a $\cPT$-symmetrically deformed version of the graphene tight-binding model under a magnetic field.
We analyze the structure of the spectra and  the eigenvectors of the Hamiltonians around the $K$ and  $K'$ points, both in the $\cPT$-symmetric and $\cPT$-broken regions.
In particular we show that the presence of the deformation parameter $V$  produces several interesting consequences, including the asymmetry of the zero-energy states of the Hamiltonians and the breakdown of the completeness of the eigenvector sets.
We also discuss the biorthogonality of the eigenvectors, which {turns out to be} different in the $\cPT$-symmetric and $\cPT$-broken regions.

\end{abstract}

\vspace{2cm}


\vfill


\newpage

\section{Introduction}\label{sectintr}


Since its isolation on an adhesive tape~\cite{graphene}, graphene has quickly become a material of intensive attention.
Many researches have revealed various interesting aspects of the material; see e.g.\ Refs.~\cite{review1,review2,review3,review4} for reviews.
One of the most interesting features emerges particularly when we apply a magnetic field to it~\cite{mag1,mag2,mag3}.
The Landau levels due to the magnetic field form a structure different from the simple two-dimensional electron gas in that there are levels of zero energy and in that the non-zero energy levels are spaced not equally but proportionally to the square root of the level number.

\subsection{$\cPT$-symmetric non-Hermitian Hamiltonian}
\label{sec1.1}

In the present paper, we apply to graphene yet another ingredient of recent interest, namely the $\cPT$ symmetry~\cite{Bender98,Bender99,Bender07,Bender16}.
In order to attract attention of condensed-matter physicists, let us briefly describe the $\cPT$ symmetry here.
It refers to the parity-and-time symmetry of a Hamiltonian.

The simplest example of the $\cPT$-symmetric Hamiltonian may be the two-by-two matrix
\begin{align}\label{eq1.10}
H=\begin{pmatrix}
iV & g \\
g & -iV
\end{pmatrix},
\end{align}
which we can interpret as a two-site tight-binding model:
the two sites are coupled with a real coupling parameter $g$;
the first site has a complex potential $iV$, which can represent injection of particles from the environment, because the amplitude of the wave vector would increase in time as $e^{Vt/\hbar}$ if the site were isolated;
the potential $-iV$ of the second site can represent removal of the particles to the environment.
The $\cP$ operator swaps the first and second sites, which is represented by the linear operator
\begin{align}\label{eq1.20}
\cP=\begin{pmatrix}
0 & 1 \\
1 & 0
\end{pmatrix}.
\end{align}
The $\cT$ operator is complex conjugation, which is an anti-linear operator.
It is easy to confirm that the Hamiltonian $H$ in Eq.~\eqref{eq1.10} satisfies
\begin{align}\label{eq1.30}
(\cPT)H(\cPT)=H;
\end{align}
the parity operation $\cP$ swaps $iV$ and $-iV$ but the time operation $\cT$ switches them back to the original.
This is what we mean by the $\cPT$ symmetry of the Hamiltonian.

The Hamiltonian $H$ has the eigenvalues
\begin{align}\label{eq1.40}
E^{(\pm)}=\pm\sqrt{g^2-V^2},
\end{align}
which are real for $g\geq V$, although $H$ is non-Hermitian; $H^\dag\neq H$.
For $g<V$, on the other hand, the eigenvalues become complex (pure imaginary in this specific model).
This transition between real and complex eigenvalues physically means the following.
In the strong-coupling case $g\geq V$, the particles injected to the first site can flow abundantly into the second site, where they are removed at the same rate as the injection.
This constitutes a stationary state of a constant flow, which is indicated by the reality of the eigenvalues.
In the weak-coupling case $g<V$, on the other hand, the particles tend to build up in the first site, while they keep becoming scarcer in the second site.
This instability is indicated by the non-reality of the eigenvalues.
The first situation is often called the $\cPT$-symmetric phase, whereas the second one is the $\cPT$-broken phase.

At the transition point $g=V$, not only the two eigenvalues coalesce with each other, but the corresponding two eigenvectors become parallel.
This is therefore not the standard degeneracy, but often called an exceptional point~\cite{Kato52,Kato58,Kato66}, which has a huge literature recently, including experimental studies~\cite{Dembowski01,Klaiman08,Lefebvre09,Heiss12,Brandstetter14,Peng14}.
At the exceptional point, the eigenvectors are not complete and the Hamiltonian $H$ is not diagonalizable.
(In fact, non-Hermitian matrices are \textit{generally diagonalizable} except at the exceptional points.)

The transition at an exceptional point between the two phases indeed happens in a very wide class of $\cPT$-symmetric operators.
Suppose that a general $\cPT$-symmetric Hamiltonian has an eigenvector $\phi_n$ with an eigenvalue $E_n$:
\begin{align}\label{eq1.50}
H\phi_n=E_n\phi_n.
\end{align}
Inserting the symmetry relation~\eqref{eq1.30}, we have
\begin{align}\label{eq1.60}
H(\cPT)\phi_n=(\cPT)E_n\phi_n,
\end{align}
where we used the fact $(\cPT)^2=1$.
When the eigenvalue $E_n$ is real, the operator $\cPT$ passes it, yielding
\begin{align}\label{eq1.70}
H(\cPT)\phi_n=E_n(\cPT)\phi_n,
\end{align}
which means that $(\cPT)\phi_n$ is also an eigenvector with the same eigenvalue.
If we assume no degeneracy of the eigenvalue $E_n$ for simplicity, we conclude that $(\cPT)\phi_n\propto\phi_n$;
indeed, we can choose the phase of $\phi_n$ so that we can make $(\cPT)\phi_n=\phi_n$;
namely, the eigenvector is $\cPT$-symmetric.
This is what happens in the $\cPT$-symmetric phase.
When the eigenvalue $E_n$ is complex, on the other hand, we have, instead of Eq.~\eqref{eq1.70},
\begin{align}\label{eq1.80}
H(\cPT)\phi_n=\overline{E}_n(\cPT)\phi_n,
\end{align}
where $\overline{E}_n$ denotes the complex conjugate of $E_n$.
This means that we always have a complex-conjugate pair of eigenvalues $E_n$ and $\overline{E}_n$ with the eigenvectors $\phi_n$ and $(\cPT)\phi_n$;
each eigenvector is not $\cPT$-symmetric anymore in spite of the fact that the Hamiltonian is still $\cPT$-symmetric.
This is what happens in the $\cPT$-broken phase.
In typical situations including the example~\eqref{eq1.10}, two neighboring real eigenvalues in the $\cPT$-symmetric phase, as we tune system parameters, are attracted to each other, collide at the exceptional point, and then become a pair of complex-conjugate eigenvalues in the $\cPT$-broken phase, which repel each other; see Fig.~\ref{fig-exceptional point}.
\begin{figure}
\centering
\includegraphics[width=0.4\textwidth]{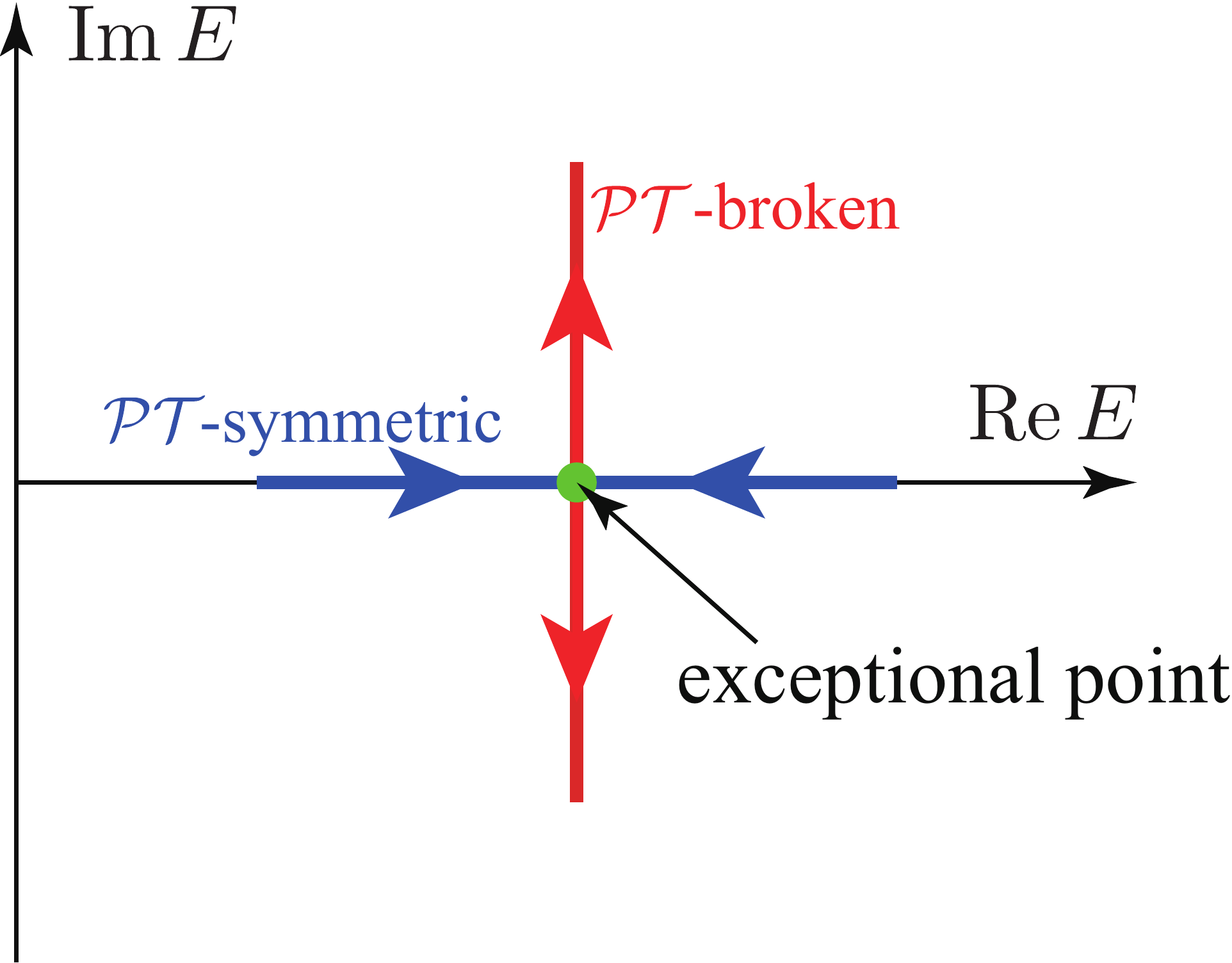}
\caption{(Color online) A schematic diagram of typical transition between the $\cPT$-symmetric and $\cPT$-broken phases.
As we tune system parameters from the $\cPT$-symmetric phase to the exceptional point and further on to the $\cPT$-broken phase, two real eigenvalues neighboring on the real axis are attracted to each other (indicated by the blue horizontal arrows on the real axis), collide at the exceptional point (indicated by a green dot), and become a complex-conjugate pair, which repel each other (indicated by the red vertical arrows).}
\label{fig-exceptional point}
\end{figure}

Questions of interest include the following:
Is it possible to formulate a standardized quantum mechanics for non-Hermitian but $\cPT$-symmetric Hamiltonians with real energy eigenvalues, namely in the $\cPT$-symmetric phase?
What is the general theoretical structure of the $\cPT$-broken phase, on the other hand?
A more specific subject of study is to find $\cPT$-symmetric models that describe physically interesting situations.

In the present paper, we introduce the potential $iV$ to one sublattice of graphene under a magnetic field and the potential $-iV$ to its other sublattice, which constitutes a $\cPT$-symmetric situation.
It is quite common to introduce a staggered chemical potential to graphene, that is, $\mu$ to one sublattice and $-\mu$ to the other sublattice,
which may be indeed realized by hexagonal lattice of boron-nitride~\cite{BN1,BN2,BN3,BN4}, in which boron atoms are on the A sublattice and nitride atoms are on the B sublattice.
Our $\cPT$-symmetric situation may be also realized in the following way:
suppose that we put a hexagonal lattice of two elements, such as boron-nitride, on a substrate;
assume that the substrate is an electron-doping material for one element but hole-doing for the other.
This can materialize our $\cPT$-symmetric situation.

For completeness, we should observe that few other $\cPT$, or non-Hermitian, versions of the graphene have been proposed in recent years, but in a different spirit with respect to ours \cite{gr1,gr2,gr3,gr4}.


\subsection{Brief overview of the graphene tight-binding model}

Let us now describe the model in more detail.
Graphene forms a hexagonal lattice, which is a bipartite lattice; see Fig.~\ref{fig2}.
\begin{figure}
\centering
\includegraphics[width=0.45\textwidth]{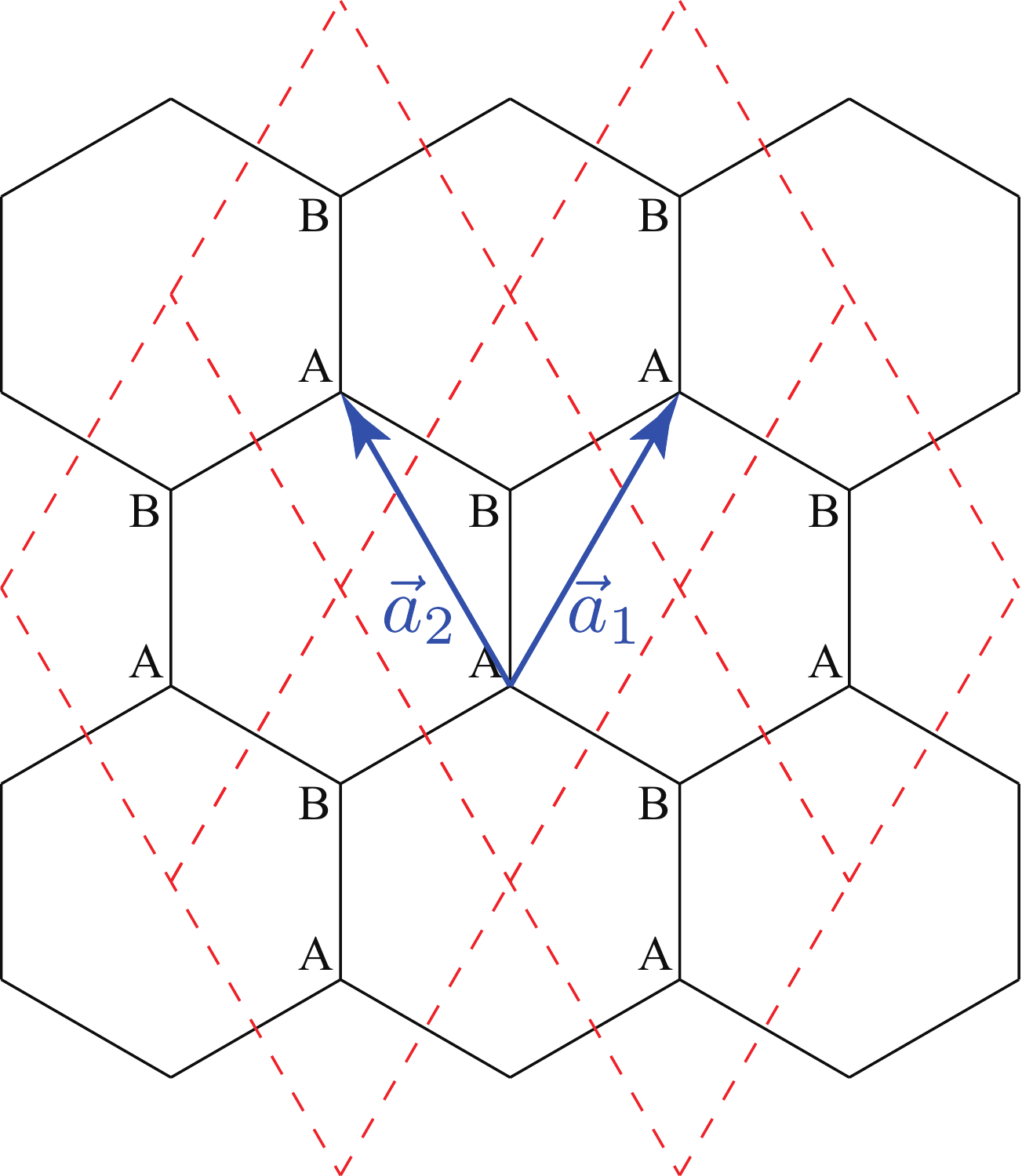}
\caption{(Color online) A hexagonal lattice (black solid lines). The red broken lines indicate unit cells.
Each unit cell consists of two sites, one on the A sublattice, the other on the B sublattice.
The two blue arrows indicate the vectors that denote the relative locations of the neighboring unit cells.}
\label{fig2}
\end{figure}
A unit cell, indicated by red broken lines in Fig.~\ref{fig2}, consists of two sites, one on the A sublattice and the other on the B sublattice.
If we assume that the electrons, specifically $\pi$ electrons, hop only to nearest-neighbor lattice points, those on a site on the A sublattice hop only to sites on the B sublattice, and vice versa.
The tight-binding Hamiltonian in the real-space representation therefore is of the following form:
on the diagonal, we have two-by-two blocks
\begin{align}\label{eq1.100}
H_\mathrm{unit}
=\begin{pmatrix}
\mu_A & t_1 \\
t_1 & \mu_B
\end{pmatrix},
\end{align}
which is the local Hamiltonian inside a unit cell with the chemical potentials $\mu_A$ and $\mu_B$ for the A and B sublattices, respectively, and with the {non-zero off-diagonal} intra-unit-cell hopping elements $t_1$ between the two sites.
In addition, the total Hamiltonian has the inter-unit-cell hopping elements $t_1$ between different unit cells.

By Fourier transforming the basis set with respect to the unit cells, we end up with the block-diagonalized Hamiltonian
\begin{align}\label{eq1.110}
H=\begin{pmatrix}
\ddots & 0 & 0 & 0 & 0 \\
0 & \tilde{H}_\mathrm{unit}(\vec{k}_1) & 0 & 0 & 0 \\
0 & 0 & \tilde{H}_\mathrm{unit}(\vec{k}_2) & 0 & 0 \\
0 & 0 & 0 & \tilde{H}_\mathrm{unit}(\vec{k}_3) & 0 \\
0 & 0 & 0 & 0 & \ddots
\end{pmatrix}
\end{align}
with
\begin{align}\label{eq1.120}
\tilde{H}_\mathrm{unit}(\vec{k})=\begin{pmatrix}
\mu_A & t_1(1+e^{i\vec{k}\cdot\vec{a}_1}+e^{i\vec{k}\cdot\vec{a}_2}) \\
 t_1(1+e^{-i\vec{k}\cdot\vec{a}_1}+e^{-i\vec{k}\cdot\vec{a}_2}) & \mu_B
\end{pmatrix},
\end{align}
where $\vec{a}_1$ and $\vec{a}_2$ are indicated in Fig.~\ref{fig2}.
We thereby have two energy eigenvalues for each wave number $\vec{k}$, which form two energy bands in the two-dimensional wave-number space, as shown in Fig.~\ref{fig-bands} for $\mu_A=\mu_B=0$.
\begin{figure}
\centering
\includegraphics[width=0.45\textwidth]{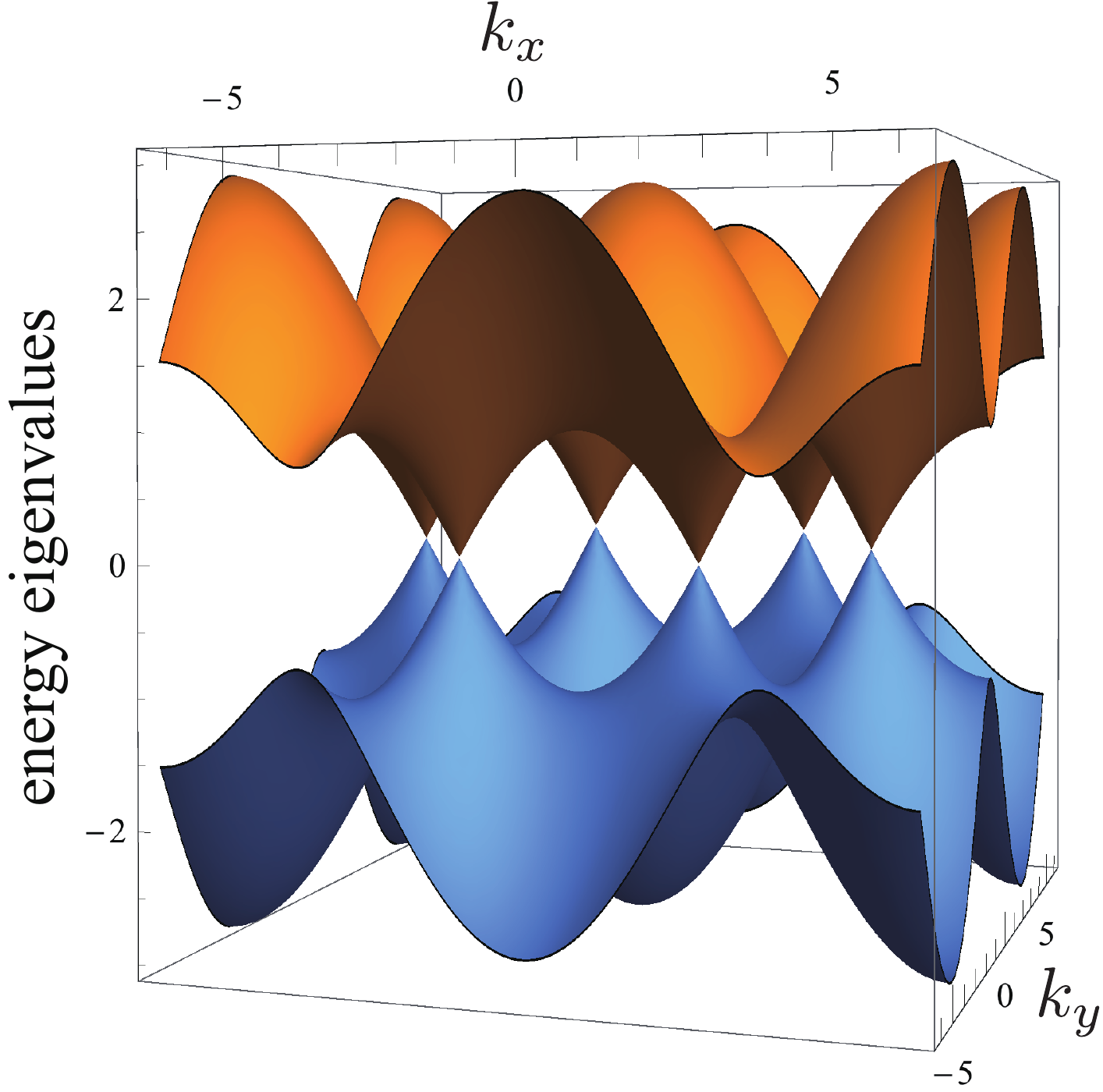}
\caption{(Color online) The energy bands of the tight-binding model in Fig.~\ref{fig2} with $\mu_A=\mu_B=0$.
The energy unit in the vertical axis is given by $t_1$, while the unit of the wave numbers $k_x$ and $k_y$ are given by the inverse of the lattice constant, which we put to unity here.}
\label{fig-bands}
\end{figure}

Among the blocks of the block-diagonalized Hamiltonian, the most important are the blocks of the two specific wave numbers, namely the Dirac points  $K$ and $K'$, respectively specified by
\begin{align}
\vec{K}=\frac{2\pi}{3}\begin{pmatrix}
1\\
\sqrt{3}
\end{pmatrix},
\qquad
\vec{K}'=\frac{2\pi}{3}\begin{pmatrix}
-1\\
\sqrt{3}
\end{pmatrix},
\end{align}
at which the energy eigenvalues are degenerate to zero for $\mu_A=\mu_B=0$.
Because the Fermi energy for graphene is zero, these points control the elementary excitation of graphene.
The upper and lower energy bands touch at these points, as can be seen in Fig.~\ref{fig-bands}, forming Dirac cones around the points, which are schematically shown in Fig.~\ref{fig4}(a).
\begin{figure}
\centering
\includegraphics[width=0.6\textwidth]{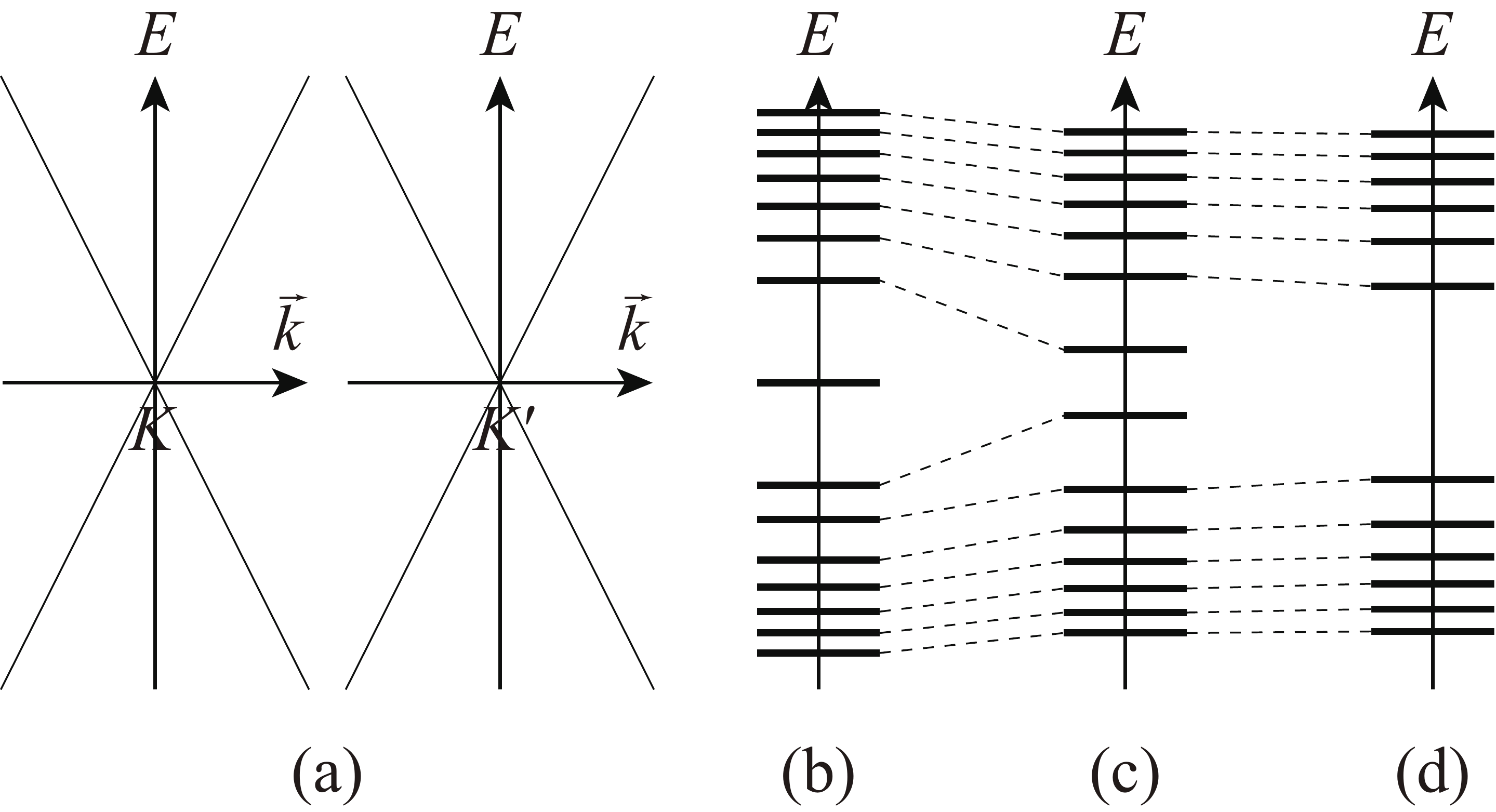}
\caption{(a) The dispersion relation of the Dirac cones around the $K$ and $K'$ points for $\mu_A=\mu_B=0$.
The Fermi energy of graphene is zero, which coincides with the Dirac points $K$ and $K'$.
(b) The Landau levels are formed under a magnetic field.
The levels are spaced proportionally to $\sqrt{n_2}$;
see Sec.~\ref{sectsam} for the definition of the quantum number $n_2=0,1,2,\ldots$.
(c) Shifts of the Landau levels for $V=0.9$.
The levels are spaced as $\sqrt{n_2-V^2}$.
The central Landau level $n_2=0$ has already become complex.
(d) Further shifts for $V=1.1$.
The levels with $n_2=1$ have collided with each other and become complex.}
\label{fig4}
\end{figure}
In the standard graphene, therefore, the low-energy excitations follow relativistic quantum mechanics;
this is one big feature of graphene, namely, the desktop relativity.

\subsection{Summary of the results}

As we predicted above, we apply two ingredients to the graphene tight-binding model, namely a magnetic field and a $\cPT$-symmetric chemical potential.
First, the spectrum is quantized to the Landau levels under a magnetic field.
Focusing on the Dirac cones around the $K$ and $K'$ points, we can write down the effective Hamiltonian as in Eq.~\eqref{20} below.
As is well studied (see e.g.\ Ref.~\cite{mag2}), which we will repeat in our way in Sec.~\ref{sectsam}, the Landau levels are not equally spaced as in the standard two-dimensional electron gas, but spaced proportionally to $\sqrt{n_2}$, as shown schematically in Fig.~\ref{fig4}(b);
see Sec.~\ref{sectsam} for the definition of the quantum number $n_2=0,1,2,\ldots$.
Each Landau level has an infinite number of degeneracy because of another quantum number $n_1=0,1,2,\ldots$.

We then further apply the $\cPT$-symmetric potential to the model.
We set the potentials to $\mu_A=iV$ for the A sublattice and $\mu_B=-iV$ for the B sublattice, as is represented in Eq.~\eqref{31} below.
Let us define the $\cP$ operation as the mirror reflection with respect to the horizontal axis of Fig.~\ref{fig2};
it then swaps the A and B sublattices with each other, changes the sign of the potentials $\pm iV$, which is represented by the transformation
\begin{align}
\begin{pmatrix}
0 & 1 \\
1 & 0
\end{pmatrix}
\tilde{H}_\mathrm{unit}(\vec{k})
\begin{pmatrix}
0 & 1 \\
1 & 0
\end{pmatrix}.
\end{align}
The $\cT$ operation, which is the complex conjugation, then changes the Hamiltonian back to the original one.
See the end of Sec.~\ref{sec1.1} for a possible materialization of the $\cPT$-symmetric situation.

We will show in Sec.~\ref{sec3} that the Landau levels are then spaced proportionally to $\sqrt{n_2-V^2}$ (after proper parameter normalization) under a set of biorthogonal eigenstates.
Therefore, as we increase the potential $V$, two Landau levels labeled by $n_2$ approach each other, collide with each other when $V^2=n_2$, which is an exceptional point, and then split into a pair of two pure imaginary eigenvalues $\pm i\sqrt{V^2-n_2}$; see Fig.~\ref{fig4}(c--d).
We will deduce that at this exceptional point, the eigenvectors of the two Landau levels become parallel, which makes the set of biorthogonal eigenstates incomplete.
Note that each Landau level still has an infinite number of degeneracy because of the other quantum number $n_1$.

Note also that the central level $n_2=0$ becomes a complex eigenvalue as soon as we introduce the $\cPT$-symmetric potential $V$; see Fig.~\ref{fig4}(b--c).
We show that the central level $n_2=0$ of the $K$ point coalesces with the central level $n_2=0$ of the $K'$ point and becomes complex as $\pm iV$.
This particular coalescence, however, is not an exceptional point but a degeneracy because it occurs in the Hermitian limit $V=0$.
These are probably the main results of the present paper.

This article is organized as follows: in the next Section~\ref{sectsam} we briefly review the Hermitian version of the model under a magnetic field and some of its main mathematical characteristics. In Sec.~\ref{sec3} we introduce our $\cPT$-symmetrically deformed version of the model with $\pm iV$ and consider the consequences of this deformation. Our conclusions and future perspective are given in Sec.~\ref{sec5}. { To make the paper self-contained, we have added \ref{appA} with some useful facts for non-Hermitian Hamiltonians.}

\section{The Dirac cones under a magnetic field}\label{sectsam}

Let us first consider a layer of graphene in an external constant magnetic field along $z$: $\vec B=B \hat e_3$, which can be deduced from $\vec B=\nabla\wedge\vec A$ with a vector potential in the symmetric gauge, $\vec A=\frac{B}{2}(-y,x,0)$. The Hamiltonian for the two Dirac points $K$ and $K'$ can be written as~\cite{review1}
\begin{align}
H_D=\begin{pmatrix}
        H_K & 0 \\
        0 & H_{K'}
      \end{pmatrix},
\end{align}
where, in the units $\hbar=c=1$, we have
\begin{align}
H_K=v_F\begin{pmatrix}
        0 & p_x-ip_y+\frac{eB}{2}(y+ix) \\
        p_x+ip_y+\frac{eB}{2}(y-ix) & 0
      \end{pmatrix},
\label{20}
\end{align}
while $H_{K'}$ is just its transpose: $H_{K'}=H_K^T$. Here $x,y,p_x$ and $p_y$ are the canonical, Hermitian, two-dimensional position and momentum operators, which satisfy $[x,p_x]=[y,p_y]=i\1$ with all the other commutators being zero, where $\1$ is the identity operator in the Hilbert space $\Hil:=\Lc^2(\Bbb R^2)$. The factor $v_F$ is the so-called Fermi velocity. The scalar product in $\Hil$ will be indicated as $\left\langle.,.\right\rangle$.

Let us now introduce the parameter called the magnetic length, $\xi=\sqrt{2/(e|B|)}$, as well as the following canonical operators:
\begin{align}
X=\frac{1}{\xi}x,\qquad Y=\frac{1}{\xi}y,\qquad P_X=\xi p_x, \qquad P_Y=\xi p_y.
\end{align}
These operators can be used to define two different pairs of bosonic operators: we first put $a_X=(X+iP_X)/\sqrt{2}$ and $a_Y=(Y+iP_Y)/\sqrt{2}$, and then
\begin{align}
A_1=\frac{a_X-ia_Y}{\sqrt{2}},\qquad A_2=\frac{a_X+ia_Y}{\sqrt{2}}.
\label{21}
\end{align}
The following commutation rules are satisfied:
\begin{align}
[a_X,a_X^\dagger]=[a_Y,a_Y^\dagger]=[A_1,A_1^\dagger]=[A_2,A_2^\dagger]=\1,
\label{22}
\end{align}
with the other commutators being zero. In terms of these operators, $H_K$ appears particularly simple. Indeed, we find
\begin{align}
H_K^{(+)}=\frac{2iv_F}{\xi}
      \begin{pmatrix}
        0 & A_2^\dagger \\
        -A_2 & 0
      \end{pmatrix},
\qquad
H_{K'}^{(+)}=\frac{2iv_F}{\xi}
      \begin{pmatrix}
        0 & -A_2 \\
        A_2^\dag & 0
      \end{pmatrix}
\label{23}
\end{align}
for $B>0$ and
\begin{align}
H_K^{(-)}=\frac{2iv_F}{\xi}
      \begin{pmatrix}
        0 & -A_1 \\
        A_1^\dag & 0
      \end{pmatrix},
\qquad
H_{K'}^{(-)}=\frac{2iv_F}{\xi}
      \begin{pmatrix}
        0 & A_1^\dag \\
        -A_1 & 0
      \end{pmatrix}
\label{23.5}
\end{align}
for $B<0$. Note that $H_K^{(+)}$ and $H_K^{(-)}$ are different expressions of the same Hamiltonian~\eqref{20}.
It is evident that $H_K=H_K^\dagger$, and a similar conclusion can also be deduced for $H_{K'}$. It is also clear that neither $H_K^{(+)}$ nor $H_{K'}^{(+)}$ depends on $A_1$ and $A_1^\dagger$, so that their eigenstates possess a manifest degeneracy. The same is true for $H_K^{(-)}$ nor $H_{K'}^{(-)}$, which do not depend on $A_2$ and $A_2^\dagger$. However, from now on, we will essentially concentrate on $H_{K}^{(+)}$ and $H_{K'}^{(+)}$, except for what is discussed in Appendix B. Most of what we are going to discuss from now on can be restated easily for $H_{K}^{(-)}$ and $H_{K'}^{(-)}$. For instance, the eigenvectors of $H_{K}^{(-)}$ could be found from those of $H_{K'}^{(+)}$, replacing the operators $A_1$ and $A_1^\dagger$ with $A_2$ and $A_2^\dagger$, and vice versa.

Now, let $e_{0,0}\in \Hil$ be the non-zero vacuum of $A_1$ and $A_2$: $A_1e_{0,0}=A_2e_{0,0}=0$. Then we introduce, as usual,
\begin{align}
e_{n_1,n_2}=\frac{1}{\sqrt{n_1!n_2!}}(A_1^\dagger)^{n_1}(A_2^\dagger)^{n_2}e_{0,0};
\label{24}\end{align}
the set $\E=\left\{e_{n_1,n_2},\, n_j\geq0\right\}$ is an orthonormal basis for $\Hil$, being the same as the one for a two-dimensional harmonic oscillator.

Rather than working in $\Hil$, in order to deal with $H_K^{(+)}$ it is convenient to work in a different Hilbert space, namely the direct sum of $\Hil$ with itself, $\Hil_2=\Hil\oplus\Hil$:
\begin{align}
\Hil_2=\left\{f=
                  \begin{pmatrix}
                    f_1 \\
                    f_2
                  \end{pmatrix},
                \quad f_1,f_2\in\Hil
\right\}.
\end{align}
In the new Hilbert space $\Hil_2$, the scalar product $\left\langle.,.\right\rangle_2$ is defined as
\begin{align}
\left\langle f,g\right\rangle_2:=\left\langle f_1,g_1\right\rangle+\left\langle f_2,g_2\right\rangle,
\label{25}
\end{align}
and the square norm is $\|f\|_2^2=\|f_1\|^2+\|f_2\|^2$, for all $f=
                  \begin{pmatrix}
                    f_1 \\
                    f_2
                  \end{pmatrix}$, $g=
                  \begin{pmatrix}
                    f_1 \\
                    f_2
                  \end{pmatrix}$ in $\Hil_2$. Introducing now the vectors
\begin{align}
e_{n_1,n_2}^{(1)}=
                  \begin{pmatrix}
                    e_{n_1,n_2} \\
                    0
                  \end{pmatrix},
              \qquad e_{n_1,n_2}^{(2)}=
                  \begin{pmatrix}
                    0 \\
                    e_{n_1,n_2}
                  \end{pmatrix},
\label{26}
\end{align}
we have an orthonormal basis set $\E_2:=\{e_{n_1,n_2}^{(k)},\,n_1,n_2\geq0,\,k=1,2\}$ for $\Hil_2$. This means, among other things, that $\E_2$ is complete in $\Hil_2$: the only vector $f\in\Hil_2$ which is orthogonal to all the vectors of $\E_2$ is the zero vector.

In view of application to graphene it is more convenient to use a different orthonormal basis of $\Hil_2$, the set $\V_2=\{v_{n_1,n_2}^{(k)},\,n_1,n_2\geq0,\,k=\pm\}$, where
\begin{align}
v_{n_1,0}^{(+)}=v_{n_1,0}^{(-)}=e_{n_1,0}^{(1)}=
                  \begin{pmatrix}
                    e_{n_1,0} \\
                    0
                  \end{pmatrix},
\label{27}
\end{align}
Quite often, in the rest of the paper, we call this vector simply $v_{n_1,0}$. For $n_2\geq1$, we have
\begin{align}
v_{n_1,n_2}^{(\pm)}=\frac{1}{\sqrt{2}}
                  \begin{pmatrix}
                    e_{n_1,n_2} \\
                    \mp i e_{n_1,n_2-1}
                  \end{pmatrix}
               =\frac{1}{\sqrt{2}}\left(e_{n_1,n_2}^{(1)}\mp i e_{n_1,n_2-1}^{(2)}\right).
\label{28}
\end{align}
It is easy to check that these vectors are mutually orthogonal, normalized in $\Hil_2$, and complete. Hence, $\V_2$ is an orthonormal basis, as stated before. This is not surprising, since its vectors are indeed the eigenvectors of $H_K^{(+)}$:
\begin{align}
H_K^{(+)} v_{n_1,0}=0, \quad H_K^{(+)} v_{n_1,n_2}^{(+)}=E_{n_1,n_2}^{(+)}  v_{n_1,n_2}^{(+)}, \quad   H_K^{(+)} v_{n_1,n_2}^{(-)}=E_{n_1,n_2}^{(-)}  v_{n_1,n_2}^{(-)},
\label{29}
\end{align}
where $E_{n_1,n_2}^{(\pm)}=\pm (2v_F/\xi)\sqrt{n_2}$. More compactly we can simply write $H_K^{(+)} v_{n_1,n_2}^{(\pm)}=E_{n_1,n_2}^{(\pm)} v_{n_1,n_2}^{(\pm)}$. We see explicitly that the eigenvalues have an infinite degeneracy with respect to the quantum number $n_1$, which can be removed by using the angular momentum~\cite{ang}.  We will not consider this aspect here, since it is not relevant for us.

Of course, both $\E_2$ and $\V_2$ can be used to produce two different resolutions of the identity. Indeed we have
\begin{align}
\sum_{n_1,n_2=0}^\infty \sum_{k=1}^2\left\langle e_{n_1,n_2}^{(k)},f\right\rangle_2 e_{n_1,n_2}^{(k)}=\sum_{n_1,n_2=0}^\infty \sum_{k=\pm}\left\langle v_{n_1,n_2}^{(k)},f\right\rangle_2 v_{n_1,n_2}^{(k)}=f,
\label{210}
\end{align}
for all $f\in\Hil_2$.

\vspace{2mm}

{\bf Remark:} What we have seen so far can be easily adapted to the analysis of the Hamiltonian for the other Dirac cone, $H_{K'}^{(+)}$, which is simply the transpose of $H_K^{(+)}$, and, as we have already pointed out, also to $H_K^{(-)}$ and $H_{K'}^{(-)}$.
We will say more on the other Dirac cone in Sec.~\ref{sec3.2}, { in the presence of the $\cPT$-symmetric potential}.

\section{$\cPT$-symmetric chemical potential}
\label{sec3}

We now introduce the $\cPT$-symmetric chemical potential to Eq.~(\ref{23}) as follows:
\begin{align}
H_K^{(+)}(V)=\frac{2iv_F}{\xi}
      \begin{pmatrix}
        V & A_2^\dagger \\
        -A_2 & -V
      \end{pmatrix},
\label{31}
\end{align}
where $V$ is assumed to be a strictly positive (real) quantity.
As we show the details in \ref{appB}, for $V=0$, the \textit{set} of Dirac cones at $K$ and $K'$ are time-reversal symmetric as well as parity symmetric.
For $V\neq 0$, it observes neither symmetries but does the $\cPT$ symmetry.

An easy extension of the standard arguments allows us to deduce that the general expression of the eigenvectors are still, as in the case with $V=0$, of the form (\ref{28}), but with some essential difference, which is also reflected in the form of the eigenvalues. In particular we first find that
\begin{align}
E_{n_1,n_2}^{(\pm)}=\frac{\pm 2v_F}{\xi}\,\sqrt{n_2-V^2},
\label{32}
\end{align}
which reduces to the known value if $V\rightarrow 0$, and which is still independent of $n_1$. A major difference appears as follows: if $n_2> V^2$, then the values of the energy are real; we are in the $\cPT$-symmetric region. As soon as $n_2<V^2$, however, the energy turns out to be complex, and we are in the $\cPT$-broken region. We will come back to this later on.

Going now to the eigenvectors, we first observe that
\begin{align}
\Phi_{n_1,0}^{(+)}=
                  \begin{pmatrix}
                    e_{n_1,0} \\
                    0
                  \end{pmatrix},
\label{33}
\end{align}
is an eigenvector of $H_K^{(+)}(V)$ with the eigenvalue $E_{n_1,0}^{(+)}=2iv_FV/\xi$. On the other hand, we can prove that there is no non-zero eigenstate corresponding to $E_{n_1,0}^{(-)}=-2iv_FV/\xi$. In fact, if we assume that such a non-zero vector $\Phi_{n_1,0}^{(-)}=
                  \begin{pmatrix}
                    c_1 \\
                    c_2
                  \end{pmatrix}
                $ does exist, it must satisfy the equation $H_K^{(+)}(V)\Phi_{n_1,0}^{(-)}=\frac{- 2iv_F}{\xi}\Phi_{n_1,0}^{(-)}$, which implies in turn that $c_1$ and $c_2$ should satisfy the equations $A_2c_1=0$ and $A_2^\dagger c_2=-2Vc_1$. Hence, acting on this last with $A_2$ and using the first, we obtain $A_2A_2^\dagger c_2=0$, so that $\|A_2^\dagger c_2\|=0$ and therefore $A_2^\dagger c_2=0$. Then we have
\begin{align}
0=A_2\left(A_2^\dagger c_2\right)=\left(\1+A_2^\dagger A_2\right)c_2\quad\Rightarrow \quad -\|c_2\|^2=\|A_2c_2\|^2.
\end{align}
For this last equality to be satisfied, we must have $\|c_2\|=\|A_2c_2\|=0$. Hence $c_2$ must be zero. The fact that $c_1=0$ also is now a consequence of the equality above $A_2^\dagger c_2=-2Vc_1$, at least if $V\neq 0$. Then the trivial vector $\Phi_{n_1,0}^{(-)}=
                  \begin{pmatrix}
                    0 \\
                    0
                  \end{pmatrix}
               $ is the only solution that satisfies the equation $H_K^{(+)}(V)\Phi_{n_1,0}^{(-)}=\frac{- 2iv_F}{\xi}\Phi_{n_1,0}^{(-)}$. In the limit $V=0$, on the other hand, the equation $A_2^\dagger c_2=-2Vc_1$ does not imply that $c_1=0$ and in fact a nontrivial ground state in this case does exist, as discussed in Sec.~\ref{sectsam}. The reason for this is that, if $V=0$, there is no difference between $E_{n_1,0}^{(+)}$ and $E_{n_1,0}^{(-)}$, which are both zero.

As for the levels with $n_2\geq1$, the normalized eigenstates are deformed versions of those in Eq.~(\ref{28}). More in detail, defining the following quantities, which are in general complex,
\begin{align}
\alpha_{n_1,n_2}^{(\pm)}=\mp i\frac{\sqrt{n_2-V^2}\mp iV}{\sqrt{n_2}},
\label{34}
\end{align}
we can write
\begin{align}
\Phi_{n_1,n_2}^{(\pm)}=\frac{1}{\sqrt{1+|\alpha_{n_1,n_2}^{(\pm)}|^2}}
                  \begin{pmatrix}
                    e_{n_1,n_2} \\
                    \alpha_{n_1,n_2}^{(\pm)} e_{n_1,n_2-1}
                  \end{pmatrix}.
\label{35}
\end{align}
With these definitions  we have
\begin{align}
H_K^{(+)}(V)\Phi_{n_1,n_2}^{(\pm)}=E_{n_1,n_2}^{(\pm)}\Phi_{n_1,n_2}^{(\pm)}.
\label{36}
\end{align}

It is easy to see what happens for ${H_K^{(+)}}^\dagger(V)$, since this can be recovered from $H_K^{(+)}(V)$ just replacing everywhere $V$ with $-V$. In particular, since the eigenvalues are quadratic in $V$, $H_K^{(+)}(V)$ and ${H_K^{(+)}}^\dagger(V)$ turn out to be isospectral. Concerning the eigenstates, these are deduced from the eigenvectors $\Phi_{n_1,n_2}^{(\pm)}$ just with the same substitution. More in details, calling
\begin{align}
\beta_{n_1,n_2}^{(\pm)}=\mp i\frac{\sqrt{n_2-V^2}\pm iV}{\sqrt{n_2}},
\label{37}
\end{align}
for all $n_2\geq1$, we can write
\begin{align}
\Psi_{n_1,n_2}^{(\pm)}=\frac{1}{\sqrt{1+|\beta_{n_1,n_2}^{(\pm)}|^2}}\left(
                  \begin{array}{c}
                    e_{n_1,n_2} \\
                    \beta_{n_1,n_2}^{(\pm)} e_{n_1,n_2-1} \\
                  \end{array}
                \right)
\label{38}
\end{align}
and
\begin{align}
{H_K^{(+)}}^\dagger(V)\Psi_{n_1,n_2}^{(\pm)}=E_{n_1,n_2}^{(\pm)}\Psi_{n_1,n_2}^{(\pm)}.
\label{39}
\end{align}

Analogously to what happens for $H_K^{(+)}(V)$, only one ground state of ${H_K^{(+)}}^\dagger(V)$ does exist, which coincides with $\Phi_{n_1,0}^{(+)}$ above. However, the corresponding eigenvalue is now $E_{n_1,0}^{(-)}=-2iv_FV/\xi$, so that we conclude that $\Psi_{n_1,0}^{(-)}=\Phi_{n_1,0}^{(+)}$. On the other hand, no non-zero eigenvector does exist which corresponds to  $E_{n_1,0}^{(+)}=2iv_FV/\xi$. We thus deduce a similar situation with respect to the one observed for $H_K^{(+)}(V)$. We therefore conclude that the case $n_2=0$ is really exceptional; indeed we have $\Psi_{n_1,0}^{(-)}=\Phi_{n_1,0}^{(+)}=
                  \begin{pmatrix}
                    e_{n_1,0} \\
                    0
                  \end{pmatrix}$, while neither $\Psi_{n_1,0}^{(+)}$ nor $\Phi_{n_1,0}^{(-)}$ do exist. { This should be remembered in the rest of the paper, since} all the formulas considered from now on, and in particular those in Section \ref{sec3.1}, are valid only when these particular vectors are not involved.

\vspace{2mm}

{\bf Remarks:} (i) In the limit $V\rightarrow0$, all the result reduces to the ones discussed in Sec.~\ref{sectsam}. In particular the fact that  $\Psi_{n_1,0}^{(-)}=\Phi_{n_1,0}^{(+)}$ agrees with the fact that, in this limit,  $E_{n_1,0}^{(-)}=E_{n_1,0}^{(+)}=0$. It is also interesting to observe that the coefficients $\alpha_{n_1,n_2}^{(\pm)}$ and $\beta_{n_1,n_2}^{(\pm)}$ simply returns $+i$ or $-i$, as in formula (\ref{28}).

{
(ii) The choice of normalization in (\ref{35}) and (\ref{38}) is such that $\|\Phi_{n_1,n_2}^{(\pm)}\| = \| \Psi_{n_1,n_2}^{(\pm)}\|=1$. We prefer this choice, rather than the one which could also be used which makes the scalar product between $\Phi_{n_1,n_2}^{(\pm)}$ and $\Psi_{n_1,n_2}^{(\pm)}$ equal to unity, since this biorthogonality strongly refer to the value of $V$. This will be evident in the next section.
}

\vspace{2mm}

\subsection{Biorthogonality of the eigenvectors}
\label{sec3.1}

Let us call $\F_\Phi=\{\Phi_{n_1,n_2}^{(j)},\,n_1, n_2\geq0, j=\pm\}$ and $\F_\Psi=\{\Psi_{n_1,n_2}^{(j)},\,n_1, n_2\geq0, j=\pm\}$. Because of their particular forms and because of the orthogonality of the vectors $e_{n_1,n_2}$, it is clear that
\begin{align}
\left\langle\Phi_{n_1,n_2}^{(j)},\Phi_{m_1,m_2}^{(k)}\right\rangle_2=\left\langle\Psi_{n_1,n_2}^{(j)},\Psi_{m_1,m_2}^{(k)}\right\rangle_2=0
\end{align}
for all $(n_1,n_2)\neq(m_1,m_2)$, and for all choices of $j$ and $k$. It is also possible to check that,
\begin{align}
\left\langle\Phi_{n_1,n_2}^{(+)},\Phi_{n_1,n_2}^{(-)}\right\rangle_2\neq0 \quad \mbox{and}\quad
\left\langle\Psi_{n_1,n_2}^{(+)},\Psi_{n_1,n_2}^{(-)}\right\rangle_2\neq 0.
\end{align}
Therefore, eigenstates of $H_K^{(+)}(V)$ corresponding to different eigenvalues are not mutually orthogonal. This is not surprising, since $H_K^{(+)}(V)$ is not Hermitian in the present settings. However, we can check that the orthogonality is recovered when $V$ is sent to zero, i.e., when $H_K^{(+)}(V)$ becomes Hermitian.

What still remains, as quite often in situations like ours, is the possible biorthogonality of the sets $\F_\Phi$ and $\F_\Psi$. In fact, this is not so automatic, and needs some care. The point is the following: if $H$ is not Hermitian but two of its eigenvalues $E_1$ and $E_2$ are real, then the states $\varphi_1$ and $\Psi_2$ that satisfy $H\varphi_1=E_1\varphi_1$ and $H^\dagger \Psi_2=E_2\Psi_2$ are guaranteed to be mutually orthogonal. If $E_1$ or $E_2$, or both, are complex, on the other hand, this is no longer granted in general.
We will show that in our particular situation of the $\cPT$-broken region, the biorthogonality of the sets $\F_\Phi$ and $\F_\Psi$ is recovered  { only when properly pairing the } eigenstates.

First we observe that $\left\langle\Phi_{n_1,n_2}^{(j)},\Psi_{m_1,m_2}^{(k)}\right\rangle_2$ can only be different from zero if $(n_1,n_2)=(m_1,m_2)$. Otherwise these scalar products are all zero. Now, if we compute
$
\left\langle\Phi_{n_1,n_2}^{(+)},\Psi_{n_1,n_2}^{(-)}\right\rangle_2
$ for instance,
we deduce that, neglecting an unnecessary multiplication factor,
\begin{align}
\left\langle\Phi_{n_1,n_2}^{(+)},\Psi_{n_1,n_2}^{(-)}\right\rangle_2\simeq 1+\overline{\alpha_{n_1,n_2}^{(+)}}\,\beta_{n_1,n_2}^{(-)}.
\end{align}
The result of this computation depends on the values of $n_2$ and $V$. In fact, we can check that for $n_2> V^2$, we have $\overline{\alpha_{n_1,n_2}^{(+)}}\,\beta_{n_1,n_2}^{(-)}=-1$, but for $n_2< V^2$ this is not true. Hence
\begin{align}
\left\langle\Phi_{n_1,n_2}^{(+)},\Psi_{n_1,n_2}^{(-)}\right\rangle_2\,
\begin{cases}
= 0, \qquad &\mbox{if } n_2> V^2,  \\
\neq 0, \qquad &\mbox{if } n_2< V^2.
\end{cases}
\end{align}
Similarly we can check that $\left\langle\Phi_{n_1,n_2}^{(-)},\Psi_{n_1,n_2}^{(+)}\right\rangle_2$ is zero for $n_2> V^2$, but is not zero otherwise.

It is also interesting to notice that a completely opposite result is deduced in the $\cPT$-broken region, i.e.\ for purely imaginary eigenvalues. In fact, for $n_2<V^2$, we deduce that the different pair satisfies
\begin{align}
\left\langle\Phi_{n_1,n_2}^{(\pm)},\Psi_{n_1,n_2}^{(\pm)}\right\rangle_2=0,
\end{align}
so that they are biorthogonal, while they are in general not for $n_2>V^2$: $\left\langle\Phi_{n_1,n_2}^{(\pm)},\Psi_{n_1,n_2}^{(\pm)}\right\rangle_2\neq0$ for $n_2>V^2$.

These results are of course related to the reality of the eigenvalues of $H_K^{(+)}(V)$ and ${H_K^{(+)}}^\dagger(V)$. In fact, when $n_2> V^2$, the eigenvalues $E_{n_1,n_2}^{(\pm)}$ are all real, and we know for general reasons that $\Phi_{n_1,n_2}^{(\pm)}$ must be orthogonal to $\Psi_{n_1,n_2}^{(\mp)}$, but not, in general, to $\Psi_{n_1,n_2}^{(\pm)}$. On the other hand, when the eigenvalues are purely imaginary,  $\Phi_{n_1,n_2}^{(\pm)}$ are necessarily orthogonal to  $\Psi_{n_1,n_2}^{(\pm)}$, but not to  $\Psi_{n_1,n_2}^{(\mp)}$.

This has consequences on the possibility of introducing a metric, at least in the way which is discussed in Ref.~\cite{bagbook} for instance; see \ref{appA}. Here, in fact, the intertwining operator between $H_K^{(+)}(V)$ and ${H_K^{(+)}}^\dagger(V)$ has the formal expression $S_\Phi=\sum_{n_1=0, n_2=0}^\infty\sum_{j=\pm}|\Phi_{n_1,n_2}^{(j)}\left\rangle \right\langle \Phi_{n_1,n_2}^{(j)}|$ (we recall that $\Phi_{n_1,0}^{(-)}=0$). However this operator acts in different ways depending on whether we are in the $\cPT$-symmetric or $\cPT$-broken region. For instance, for $n_2>V^2$ ($\cPT$-symmetric region),  $S_\Phi\Psi_{n_1,n_2}^{(+)}$ is proportional to $\Phi_{n_1,n_2}^{(+)}$. However, for $n_2<V^2$ ($\cPT$-broken region), we can find that $S_\Phi\Psi_{n_1,n_2}^{(+)}$ is proportional to $\Phi_{n_1,n_2}^{(-)}$. Therefore, except for some multiplicative coefficients which can be fixed properly, $S_\Phi$ can change eigenstates $\Psi_{n_1,n_2}^{(+)}$ of ${H_K^{(+)}}^\dagger(V)$ either into the eigenstates $\Phi_{n_1,n_2}^{(+)}$ or into the eigenstates $\Phi_{n_1,n_2}^{(-)}$ of $H_K^{(+)}(V)$, depending on the parameter region. Of course, a similar behavior is expected for an operator $S_\Psi$ defined in analogy with $S_\Phi$.

\vspace{2mm}

An interesting issue to consider now is the completeness of the sets  $\F_\Phi$ and $\F_\Psi$. In many applications in quantum mechanics with non-Hermitian Hamiltonians the eigenvectors of a given $H$ and ${H}^\dagger$ are, in fact, non-orthogonal but complete in their Hilbert space. What may or may not be true is that they are also bases for such a Hilbert space~\cite{bagbook}. Hence, it is surely worth to investigate this kind of properties for  $\F_\Phi$ and $\F_\Psi$. In the present case, we obtain the following interesting result:

\begin{prop}
If $V$ is such that $V^2$ is not a natural number, then $\F_\Phi$ and $\F_\Psi$ are complete in $\Hil_2$. If, on the other hand, $V^2$ is a positive integer number $m_0$, then $\F_\Phi$ and $\F_\Psi$ are not complete.
\end{prop}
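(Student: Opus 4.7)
The plan is to exploit the fine block structure of the eigenvectors: each $\Phi_{n_1,n_2}^{(\pm)}$ lives in an explicitly identifiable, low-dimensional subspace of $\Hil_2$. First I would decompose
\begin{equation}
\Hil_2=\bigoplus_{n_1,n_2\geq 0} M_{n_1,n_2},
\end{equation}
where $M_{n_1,0}=\mathrm{span}(e_{n_1,0}^{(1)})$ is one-dimensional and, for $n_2\geq 1$, $M_{n_1,n_2}=\mathrm{span}(e_{n_1,n_2}^{(1)},e_{n_1,n_2-1}^{(2)})$ is two-dimensional. Since $\E_2$ is an orthonormal basis, every $e_{n_1,n_2}^{(1)}$ appears in exactly one block and every $e_{n_1,n_2}^{(2)}$ appears as the second generator of $M_{n_1,n_2+1}$, so this decomposition genuinely exhausts $\Hil_2$. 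From (\ref{33}) and (\ref{35}) one reads off $\Phi_{n_1,0}^{(+)}\in M_{n_1,0}$ and $\Phi_{n_1,n_2}^{(\pm)}\in M_{n_1,n_2}$ for $n_2\geq 1$. Completeness of $\F_\Phi$ therefore reduces, block by block, to whether the available eigenvectors span the corresponding $M_{n_1,n_2}$.

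The $n_2=0$ blocks are immediate, since $\Phi_{n_1,0}^{(+)}=e_{n_1,0}^{(1)}$ already spans $M_{n_1,0}$. For $n_2\geq 1$, the two vectors $\Phi_{n_1,n_2}^{(\pm)}$ have (up to normalization) the same $e_{n_1,n_2}$ component and differ only in the coefficient of $e_{n_1,n_2-1}$; hence they span the two-dimensional block if and only if $\alpha_{n_1,n_2}^{(+)}\neq \alpha_{n_1,n_2}^{(-)}$. A direct computation from (\ref{34}) yields
\begin{equation}
\alpha_{n_1,n_2}^{(+)}-\alpha_{n_1,n_2}^{(-)}=-\frac{2i\sqrt{n_2-V^2}}{\sqrt{n_2}},
\end{equation}
which vanishes precisely when $V^2=n_2$.

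If $V^2$ is not a positive integer, this equality fails for every $n_2\geq 1$, every block is fully spanned, and $\F_\Phi$ is complete. If instead $V^2=m_0\in\mathbb{N}$, then in each block $M_{n_1,m_0}$ the two eigenvectors coincide (the exceptional-point coalescence of Sec.~\ref{sec1.1}), so the $\Phi$'s span only a one-dimensional subspace of a two-dimensional block. For any fixed $n_1$, I would then exhibit a unit vector in $M_{n_1,m_0}$ orthogonal to $\Phi_{n_1,m_0}^{(+)}$; by orthogonality of the block decomposition this vector is automatically orthogonal to the entire set $\F_\Phi$, witnessing non-completeness. The argument for $\F_\Psi$ proceeds identically, since the coefficients $\beta_{n_1,n_2}^{(\pm)}$ from (\ref{37}) differ by the same quantity and vanish together at $n_2=V^2$.

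The main obstacle is bookkeeping rather than anything conceptual: one must verify that the blocks $M_{n_1,n_2}$ genuinely partition $\E_2$, keeping track of the shift by one in the second index of the lower component and handling the degenerate $n_2=0$ slot separately. Once this is pinned down, the whole argument collapses to checking linear independence of a pair of vectors in a two-dimensional space, which immediately pinpoints the exceptional values $V^2=n_2$ as the only obstruction to completeness.
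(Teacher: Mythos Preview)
Your proposal is correct and essentially coincides with the paper's own proof: both reduce the question to whether $\alpha_{n_1,n_2}^{(+)}\neq\alpha_{n_1,n_2}^{(-)}$ via the same computation $\alpha_{n_1,n_2}^{(+)}-\alpha_{n_1,n_2}^{(-)}=-2i\sqrt{n_2-V^2}/\sqrt{n_2}$, and both exhibit an explicit nonzero vector orthogonal to $\F_\Phi$ when $V^2=m_0$. The only cosmetic difference is that you frame the argument through an explicit orthogonal block decomposition $\Hil_2=\bigoplus M_{n_1,n_2}$, whereas the paper works directly with the orthogonality conditions $\langle f,\Phi_{n_1,n_2}^{(\pm)}\rangle_2=0$ and subtracts them; the content is identical.
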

\begin{proof}
Let $f=
                  \begin{pmatrix}
                    f_1 \\
                    f_2
                  \end{pmatrix}
                $ be a vector which is orthogonal to all the eigenvectors $\Phi_{n_1,n_2}^{(\pm)}$. We would like to show if and in which condition $f$ is zero.

First of all, since $\left\langle f,\Phi_{n_1,0}^{(+)}\right\rangle_2=0$ in particular, it follows that $\left\langle f_1,e_{n_1,0}\right\rangle=0$ for all $n_1\geq0$. Moreover, we also have, for $n_2\geq1$ and for all $n_1$,
\begin{align}
0=\left\langle f,\Phi_{n_1,n_2}^{(\pm)}\right\rangle_2=\left\langle f_1,e_{n_1,n_2}\right\rangle+\alpha_{n_1,n_2}^{(\pm)}\left\langle f_2,e_{n_1,n_2-1}\right\rangle.
\label{310}
\end{align}
Then, by subtraction, we have $\left(\alpha_{n_1,n_2}^{(+)}-\alpha_{n_1,n_2}^{(-)}\right)\left\langle f_2,e_{n_1,n_2-1}\right\rangle=0$ but, since
$\alpha_{n_1,n_2}^{(+)}-\alpha_{n_1,n_2}^{(-)}=-2i\sqrt{n_2-V^2}/\sqrt{n_2}$, it follows that, if $V^2$ is not equal to any natural numbers, then $\left\langle f_2,e_{n_1,n_2-1}\right\rangle=0$ for all $n_1$ and for all $n_2\geq1$. Then, because of the completeness of the set $\E$, we conclude that $f_2=0$. This result, together with (\ref{310}), now implies that $\left\langle f_1,e_{n_1,n_2}\right\rangle=0$ for all $n_1$ and for $n_2\geq1$. Since we also have that $\left\langle f_1,e_{n_1,0}\right\rangle=0$, however, it follows that $f_1=0$ after using again the completeness of $\E$. Hence $f=0$.

\vspace{2mm}

Let us now check what happens if $V^2=m_0$, for some particular natural number $m_0$. In this case we find that
$
\alpha_{n_1,m_0}^{(+)}=\alpha_{n_1,m_0}^{(-)}=-1
$
for all $n_1$, and therefore
\begin{align}
\Phi_{n_1,m_0}^{(+)}=\Phi_{n_1,m_0}^{(-)}=\frac{1}{\sqrt{2}}
                  \begin{pmatrix}
                    e_{n_1,m_0} \\
                    -e_{n_1,m_0-1}
                  \end{pmatrix}.
\end{align}
We see that we are {\em losing one vector}, so that it is not really surprising that the set $\F_\Phi$ ceases to be complete. In fact, a simple computation shows that, for instance, the non-zero vector $
                  \begin{pmatrix}
                    e_{0,m_0} \\
                    e_{0,m_0-1}
                  \end{pmatrix}
                $ is orthogonal to all the eigenvectors $\Phi_{n_1,n_2}^{(\pm)}$ as well as to $
                  \begin{pmatrix}
                    e_{n_1,m_0} \\
                    e_{n_1,m_0-1}
                  \end{pmatrix}
                $ for all fixed $n_1\geq0$.

\vspace{1mm}

A similar proof can be repeated for the set $\F_\Psi$.

\end{proof}

\vspace{2mm}

{\bf Remarks:} (i) The content of this Proposition can be understood in terms of exceptional points: we have an exceptional point when $V^2=m_0$, for a natural number $m_0$, while no exceptional point exists if $V^2$ is not natural. It is exactly the presence of an exceptional point which makes two eigenvectors collapse into a single one, and this prevent $\F_\Phi$ to be a basis.

(ii) The above result implies that in order for $\F_\Phi$ or $\F_\Psi$ to be bases for $\Hil_2$, $V$ must be such that its square is not a natural number, since any basis must be, first of all, complete and, in this situation, our sets are not. On the other hand, whenever $V^2$ is not an integer, $\F_\Phi$ or $\F_\Psi$ could be bases, but the question is, for the time being, still open.  { We believe that, even if this is often not so for non-Hermitian Hamiltonians~\cite{bagbook}, it is probably true in the present situation.}

\subsection{The $K'$ Dirac cone}
\label{sec3.2}

It is now interesting to observe that the results that we have deduced so far can be easily adapted to the other Dirac cone at $K'$. This is because the Hamiltonian $H_{K'}^{(+)}(V)$ in this case is simply the transpose of $H_K^{(+)}(V)$ in Eq.~(\ref{31}). Hence we have
\begin{align}
H_{K'}^{(+)}(V)={H_{K}^{(+)}}^T(V)=\frac{2iv_F}{\xi}
      \begin{pmatrix}
        V & -A_2 \\
        A_2^\dagger & -V \\
      \end{pmatrix}.
\end{align}
If we now compare the generic eigenvalue equations for $H_{K}^{(+)}(V)$ and $H_{K'}^{(+)}(V)$,
\begin{align}
H_{K}^{(+)}(V)
                  \begin{pmatrix}
                    \varphi_1 \\
                    \varphi_2
                  \end{pmatrix}
                =E
                  \begin{pmatrix}
                    \varphi_1 \\
                    \varphi_2
                  \end{pmatrix},
               \qquad \mbox{and }\qquad H_{K'}^{(+)}(V)
                  \begin{pmatrix}
                    \varphi_1' \\
                    \varphi_2'
                  \end{pmatrix}
                =E'
                  \begin{pmatrix}
                    \varphi_1' \\
                    \varphi_2'
                  \end{pmatrix},
\end{align}
it is easy to see that the second equation is mapped into the first one if we put $\varphi_1'=\varphi_2$, $\varphi_2'=-\varphi_1$ and $E'=-E$. Hence the conclusion is that the eigenvectors of $H_{K'}^{(+)}(V)$ are just those which we have deduced previously after this changes, and that the eigenvalues are just those of $H_{K}^{(+)}(V)$ but with signs exchanged. More in details we find that, for all $n_1\geq0$ and $n_2\geq1$,
\begin{align}
H_{K'}^{(+)}(V){\Phi'}_{n_1,n_2}^{(\pm)}={E'}_{n_1,n_2}^{(\pm)}{\Phi'}_{n_1,n_2}^{(\pm)},
\label{311}
\end{align}
where ${E'}_{n_1,n_2}^{(\pm)}=-{E}_{n_1,n_2}^{(\pm)}={E}_{n_1,n_2}^{(\mp)}$ and
\begin{align}
{\Phi'}_{n_1,n_2}^{(\pm)}=\frac{1}{\sqrt{1+|\alpha_{n_1,n_2}^{(\pm)}|^2}}
                  \begin{pmatrix}
                    \alpha_{n_1,n_2}^{(\pm)} e_{n_1,n_2-1} \\
                    - e_{n_1,n_2}
                  \end{pmatrix}.
\label{312}
\end{align}

When $n_2=0$ we have
\begin{align}
H_{K'}^{(+)}(V){\Phi'}_{n_1,0}^{(+)}={E'}_{n_1,0}^{(\pm)}{\Phi'}_{n_1,0}^{(+)},
\label{313}
\end{align}
where ${E'}_{n_1,0}^{(+)}={E}_{n_1,0}^{(-)}= - 2iv_FV/\xi$, and
\begin{align}
{\Phi'}_{n_1,+}^{(+)}=
                  \begin{pmatrix}
                    0 \\
                    - e_{n_1,0}
                  \end{pmatrix}.
\label{314}
\end{align}
Combining ${E'}_{n_1,0}^{(+)}= - 2iv_FV/\xi$ for $H_{K'}^{(+)}(V)$ with $E_{n_1,0}^{(+)}=2iv_FV/\xi$ for $H_{K}^{(+)}(V)$ (see below Eq.~\eqref{33}), we see that these two eigenvalues become complex as in Fig.~\ref{fig-exceptional point} but without the horizontal arrows.

Notice that, similarly to what happened for $H_{K}^{(+)}(V)$, the Hamiltonian $H_{K'}^{(+)}(V)$ has no (non-zero) eigenstate corresponding to ${E'}_{n_1,0}^{(-)}$.
Similar features as those considered for $H_{K}^{(+)}(V)$ arise also here, as for instance the completeness of the sets of eigenstates of $H_{K'}^{(+)}(V)$ and of its adjoint, and the conclusions do not differ from what we have found so far; we will not repeat similar considerations here.

\section{Perspectives and conclusion}
\label{sec5}

In this paper we have considered an extended non-Hermitian version of the graphene Hamiltonian close to the Dirac points  $K$ and $K'$. On a mathematical side we have shown that, depending on the value of the parameter $V$ measuring this non-Hermiticity, exceptional points may arise, which breaks down the existence of a basis for $\Hil_2$. In fact, the set of eigenstates of $H_{K}^{(+)}(V)$ is not even complete at the exceptional points. We have also deduced an interesting behavior concerning the zeroth eigenvalues and eigenvectors of the model: while $\Phi_{n_1,0}^{(+)}$ does exist, no $\Phi_{n_1,0}^{(-)}$ can be found in $\Hil_2$, at least if $V\neq0$. Similarly, $\Psi_{n_1,0}^{(-)}$ does exist, but $\Psi_{n_1,0}^{(+)}$ does not. Hence, introducing $V$ in the Hamiltonian creates a sort of asymmetry between the plus and the minus eigenstates, at least for the ground state. This asymmetry disappears as soon as $V$ is sent to zero.

If we compare the conclusion for the $\cPT$-symmetric graphene with the physical view of the simplest case that we described in Introduction, we may say the following.
The electrons doped on one sublattice may not be carried to the other sublattice through the central channels $n_2=0$ as soon as we introduce the $\cPT$-symmetric chemical potential.
The other channels remain open until $n_2=V^2$, when the corresponding $n_2$th channel is closed.
It may be an interesting future work to drive the system around an exceptional point to see the state swapping~\cite{Berry11,Gilary13,Milburn15}.

\section*{Acknowledgements}

This work was supported by National Group of Mathematical Physics (GNFM-INdAM). F.B. also acknowledges partial support by the University of Palermo.

\section*{Computational solution}

This paper does not contain any computational solution.

\section*{Ethics statement}

This work did not involve any active collection of human data.

\section*{Data accessibility statement}

This work does not have any experimental data.

\section*{Competing interests statement}

We have no competing interests.

\section*{Authors' contributions}

FB cured the mathematical part of the paper, with the help of NH. NH cured the physical interpretation of the results, with the help of FB. Both authors gave final approval for publication.

\section*{Funding}

This work was partly supported by GNFM-INdAM and by the University of Palermo.

\appendix

\renewcommand{\thesection}{Appendix \Alph{section}}
\renewcommand{\theequation}{\Alph{section}.\arabic{equation}}

\section{Some general facts for non-Hermitian Hamiltonians}
\label{appA}

We here briefly describe the general notion of the intertwining operator that we have introduced in Sec.~\ref{sec3.1}.
In order to avoid mathematical problems, we focus here on finite-dimensional Hilbert spaces. In this way our operators are finite matrices.

The main ingredient is an operator (i.e.\ a matrix) $H$, acting on the vector space ${\Bbb C}^{N+1}$, with $H\neq H^\dagger$ and with exactly $N+1$ distinct eigenvalues $E_n$, $n=0,1,2,\ldots,N$, where the Hermitian conjugate $H^\dagger$ of $H$ is the usual one, i.e.\ the complex conjugate of the transpose of the matrix $H$. Because of what follows, and in order to fix the ideas, it is useful to remind here that the Hermitian conjugate $X^\dagger$ of an operator $X$ is defined in terms of the {\em natural} scalar product $\left\langle .,.\right\rangle$ of the Hilbert space $\Hil=\left({\Bbb C}^{N+1},\left\langle .,.\right\rangle\right)$:
$\left\langle Xf,g\right\rangle=\left\langle f,X^\dagger g\right\rangle$, for all $f,g\in {\Bbb C}^{N+1}$, where $\left\langle f,g\right\rangle=\sum_{k=0}^N\overline{f_k}\,g_k$, with obvious notation.

In this Appendix we will restrict to the case in which all the eigenvalues $E_n$ are real, and with multiplicity one. Hence
\begin{align}
H\varphi_k=E_k\varphi_k.
\label{a1}
\end{align}
The set $\F_\varphi=\{\varphi_k,\,k=0,1,2,\ldots,N\}$ is a basis for ${\Bbb C}^{N+1}$, since the eigenvalues are all different. Then an unique biorthogonal basis of $\Hil$, $\F_\Psi=\{\Psi_k,\,k=0,1,2,\ldots,N\}$, surely exists~\cite{you,chri}: $\left\langle \varphi_k,\Psi_l\right\rangle=\delta_{k,l}$, for all $k, l$. It is easy to check that $\Psi_k$ is automatically an eigenstate of $H^\dagger$, with eigenvalue $E_k$:
\begin{align}
H^\dagger\Psi_k=E_k\Psi_k.
\label{a2}
\end{align}
Using the bra-ket notation we can write $\sum_{k=0}^N|\varphi_k\left\rangle \right\langle \Psi_k|=\sum_{k=0}^N|\Psi_k\left\rangle \right\langle \varphi_k|=\1$, where, for all $f,g,h\in\Hil$, we define $(|f\left\rangle \right\langle g|)h:=\left\langle g,h\right\rangle f$.

We now introduce the `intertwining' operators $S_\varphi=\sum_{k=0}^N|\varphi_k\left\rangle \right\langle \varphi_k|$ and $S_\Psi=\sum_{k=0}^N|\Psi_k\left\rangle \right\langle \Psi_k|$, following Ref.~\cite{bagbook}. These are bounded positive, Hermitian, invertible operators, one the inverse of the other: $S_\Psi=S_\varphi^{-1}$.

Moreover
 \begin{align}
S_\varphi\Psi_n=\varphi_n,\quad S_\Psi\varphi_n=\Psi_n,
\label{a3}
\end{align}
and we also get the following intertwining relations involving $H$, $H^\dagger$, $S_\varphi$ and $S_\Psi$:
\begin{align}
S_\Psi H=H^\dagger S_\Psi,\quad S_\varphi H^\dagger= HS_\varphi.
 \label{a4}
 \end{align}
Notice that the second equality follows from the first one, by left and right multiplying $S_\Psi H=H^\dagger S_\Psi$ with $S_\varphi$. To prove the first equality, we first observe that $(S_\Psi H-H^\dagger S_\Psi)\varphi_n = 0$ for all $n$. Hence our claim follows because of the basis nature of $\F_\varphi$.

\vspace{2mm}

{\bf Remark:} It might be interesting to recall that the intertwining operators, such as $S_\varphi$ and $S_\Psi$, are quite useful in quantum mechanics, $\cPT$-symmetric or not \cite{intop1,intop2,intop3,intop4,intop5}, in order to deduce eigenvectors of certain Hamiltonians connected by intertwining relations. For instance, let us assume that $\varphi_n$ is an eigenstate of a certain operator $H_1$ with eigenvalue $E_n$: $H_1\varphi_n=E_n\varphi_n$, and let us also assume that two other operators $H_2$ and $X$ exist such that $\varphi_n\notin\ker(X)$ and that the intertwining relation $XH_1=H_2X$ is satisfied. This is exactly what happens in (\ref{a4}), identifying $X$ with $S_\Psi$, $H_1$ with $H$ and $H_2$ with $H^\dagger$.

Then, it is a trivial exercise to check that the non-zero vector $\Psi_n=X\varphi_n$ is an eigenstate of $H_2$, with eigenvalue $E_n$. Indeed we have
$$
H_2\Psi_n=H_2\left(X\varphi_n\right)=XH_1\varphi_n=X\left(E_n\varphi_n\right)=E_n X\varphi_n=E_n \Psi_n.
$$
Note that the fact that $H_1$ and $H_2$ are Hermitian or not and the fact that $E_n$ is real or not play no role. Note also that the fact that $\Psi_n$ can be deduced out of $\varphi_n$ simply by applying $X$, is exactly what happens in our situation; see Eq.~(\ref{a3}). This explains why the intertwining operators are so important in concrete applications; they can be used, for instance, to find eigenstates of new operators starting from eigenstates of old ones.

\vspace{2mm}

\vspace{2mm}

{\bf Remark:} It is probably worth mentioning that not all we have discussed here can be easily extended if $\dim(\Hil)=\infty$. For instance, considering the intertwining relations in (\ref{a4}), if, for instance, $H$ and $S_\varphi$ are unbounded, taken $f\in D(S_\varphi)$, the domain of $S_\varphi$, there is no reason \textit{a priori} for $S_\varphi f$ to belong to $D(H)$, so that $H S_\varphi f$ needs not to be defined.

\section{$\cT$ and $\cP$ symmetries of the model with the $\cPT$-symmetric potential}
\label{appB}

In this Appendix we will briefly discuss the role of the $\cT$ and $\cP$ symmetries in our model. The $\cT$ operator works as follows:
\begin{align}
&\cT x\cT=x,\quad
\cT y\cT=y,\quad
\cT p_x\cT=-p_x,\quad
\cT p_y\cT=-p_y,
\\
&\cT i \cT=-i,\quad
\cT B\cT=-B;
\end{align}
note that, as expected for physical reasons, the time-reversal operator flips the magnetic field too.
We therefore have
\begin{align}
&\cT a_X \cT = a_X,\quad
\cT a_Y \cT = a_Y,\quad
\\
&\cT A_1 \cT =A_2,\quad
\cT A_2 \cT=A_1.
\end{align}
When we apply $\cT$ to
\begin{align}
H_K^{(+)}(V)=\frac{2iv_F}{\xi}
\begin{pmatrix}
V & A_2^\dag \\
-A_2 & -V
\end{pmatrix},
\end{align}
we have
\begin{align}\label{eq190}
\cT H_K^{(+)} (V)\cT=
-\frac{2iv_F}{\xi}
\begin{pmatrix}
V & A_1^\dag \\
-A_1 & -V
\end{pmatrix}
=-H_{K'}^{(-)}(V).
\end{align}

We thus realize that the time reversal of the Dirac cone at $K$ is the negative of the Dirac cone at $K'$.
For $V=0$, we can flip the sign by the diagonal unitary transformation
\begin{align}
\mathcal{U}=\begin{pmatrix}
1 & 0 \\
0 & -1
\end{pmatrix}
\end{align}
as in $(\cT \mathcal{U})H_K^{(+)}(0)(\cT \mathcal{U})=H_{K'}^{(-)}(0)$.
Similarly we have $(\cT \mathcal{U})H_{K'}^{(+)}(0)(\cT \mathcal{U})=H_{K}^{(-)}(0)$.
Note  that $H_K^{(+)}$ and $H_K^{(-)}$ are different expressions of the same Hamiltonian~\eqref{20}, expressions which depend on the direction of the magnetic field along $z$.
The model for $V=0$ is time-reversal symmetric in this sense.
Under $\cT$, the Dirac cone at $K$ is transformed to the one at $K'$, which in turn is transformed to the one at $K$.
Therefore, the set of the two Dirac cones for $V=0$ has the time-reversal symmetry.

The time-reversal symmetry is broken when $V\neq 0$ because $(\cT \mathcal{U})H_K^{(+)}(V)(\cT \mathcal{U})=H_{K'}^{(-)}(-V)\neq H_{K'}^{(-)}(V)$.
This is also true for the parity operation
\begin{align}
\cP=\begin{pmatrix}
0 & 1 \\
1 & 0
\end{pmatrix},
\end{align}
for which we have
\begin{align}
\cP H_K^{(+)}(V) \cP=\frac{2iv_F}{\xi}
\begin{pmatrix}
-V & -A_2 \\
A_2^\dag & V
\end{pmatrix}.
\end{align}
For $V=0$, this is isomorphic to $H_{K'}^{(-)}$ but for $V\neq 0$, $\cP H_K^{(+)}(V) \cP$ is isomorphic to $H_{K'}^{(-)}(-V)$, which is not equal to $H_{K'}^{(-)}(V)$.

For $V\neq 0$, however, the $\cPT$ symmetry is satisfied:
\begin{align}
(\cT \cP \mathcal{U})H_{K}^{(+)}(V)(\cT\cP\mathcal{U})
&=\cP H_{K'}^{(-)}(-V) \cP
\\
&=\frac{2iv_F}{\xi}
\begin{pmatrix}
V & -A_1 \\
A_1^\dag & -V
\end{pmatrix}
\\
&=H_{K'}^{(-)}(V).
\end{align}
Therefore, for $V\neq0$, the $\cT$ and $\cP$ symmetries are broken but $\cPT$ symmetry is not.

\end{document}